\theoremstyle{definition} \newtheorem{definition}{Definition}
\newtheorem{example}{Example} 
\newtheorem{proposition}{Proposition} \newtheorem{remark}{Remark}
\newtheorem{corollary}{Corollary} \newtheorem{theorem}{Theorem}
\newcommand{\R}{\mathbb{R}}
\title{New Algorithms for Solving Tropical Linear Systems}
\author{Alex Davydow} \date{\today}
\begin{document}

\maketitle

\begin{abstract}
The problem of solving tropical linear systems, a natural problem of
tropical mathematics, has already proven to be very interesting from
the algorithmic point of view: it is known to be in $NP\cap coNP$ but
no polynomial time algorithm is known, although counterexamples for
existing pseudopolynomial algorithms are (and have to be) very
complex.

In this work, we continue the study of algorithms for solving tropical
linear systems.  First, we present a new reformulation of Grigoriev's
algorithm that brings it closer to the algorithm of Akian, Gaubert,
and Guterman; this lets us formulate a whole family of new algorithms,
and we present algorithms from this family for which no known
superpolynomial counterexamples work. Second, we present a family of
algorithms for solving overdetermined tropical systems. We show that
for weakly overdetermined systems, there are polynomial algorithms in
this family. We also present a concrete algorithm from this family
that can solve a tropical linear system defined by an $m\times n$
matrix with maximal element $M$ in time $\Theta\left({m \choose n}
\mathrm{poly}\left(m, n, \log M\right)\right)$, and this time matches
the complexity of the best of previously known algorithms for feasibility
testing. 
\end{abstract}

\section{Introduction}

\subsection{Tropical mathematics and tropical linear algebra}

\emph{Tropical mathematics} unites three closely connected fields of
study: tropical algebra, tropical analysis, and tropical geometry. The
term is usually taken to mean mathematics obtained from classical
mathematics by replacing the addition and multiplication operations
with minimum and addition respectively, hence the term \emph{min-plus}
algebra. Sometimes maximum is used instead of minimum, with perfectly
symmetrical results, so in what follows we always use the minimum
operation. Taking the minimum in tropical context is usually denoted
by $\oplus$; addition, by $\otimes$. The $\oplus$ operation is
idempotent, i.e., $a \oplus a = a$, so tropical mathematics is in fact
a part of idempotent mathematics, although lately these notions have
often been identified so that the term ``tropical'' is sometimes
applied to any mathematical constructions with an idempotent
operation.

Tropical algebra was the first section of tropical mathematics to
appear. The term originates from French mathematicians and first
appears in the 1980s. Although different authors attribute the term to
different researchers \cite{Sturmfels04, Pin98, Litvinov07}, all
sources agree that the term came into general use in the honor of one
of the founders of this field, a Brazilian mathematician Imre Simon,
so the term ``tropical'' simply means how French mathematicians viewed
Brazil. Simon himself uses the term already in the work \cite{Imre88}
that laid out the foundations of tropical algebra, attributing the
term to Christian Choffrut.

At first, the term ``tropical'' was used for the discrete version of
$\left( \min, + \right)$ algebra, but at present the terminology has
shifted, and tropical algebra is usually meant to be an algebra over
the semifield $\R_{min}$ (as stated above) or even, sometimes, an
arbitrary algebra with an idempotent operation, e.g., $\left( \min,
\max\right)$ algebra.  Both Simon and his French colleagues used
tropical algebra for the study of finite state machines.

Although a systematic study of the tropical semiring began only after
the works of Simon, we should note that the $\left( \min, + \right)$
semiring had appeared before in optimization problems. For instance,
Floyd's algorithm for finding shortest paths in a graph that was
proposed in the 1960s \cite{Floyd62} can be considered as taking a
tropical degree of the distance matrix (tropical exponentiation is
similar to classical with the difference that we replace addition with
$\oplus$ and multiplication with $\otimes$).  Speaking about
idempotent algebra as a whole, the first work to make serious use of
an algebra over an idempotent ring (apart from Boolean fields) was the
work of Kleene \cite{Kleene56} that studied nerve nets in the context
of finite state machines.  At present, there is a host of literature
of matrices with idempotent coefficients and their applications, e.g.,
\cite{Cuninghame-Green79, Krivulin09, Litvinov11}.

\emph{Tropical linear algebra} is a subfield of tropical algebraic
geometry that works with systems of tropical linear
equations. Unfortunately, definitions remain a major problem in this
subfield. Many notions of linear algebra have several equivalent
definitions, but after tropicalization equivalence disappears, and we
are left with several different definitions. One striking example of
this phenomenon is that there may be several conflicting definitions
for the root of a tropical polynomial.

In the case of the root of a polynomial, researchers finally settled
on the definition of O. Viro, and now the set of roots of a polynomial
are the set of this polynomial's non-smoothness points. This
definition prevailed because it preserves such important properties
as, for instance, the fundamental theorem of algebra (that a
polynomial of degree $n$ has exactly $n$ roots, counting
multiplicities).

The problem of solving linear systems was formulated right after the
definition of a root for a tropical polynomial was given, but the
first work actually devoted to tropical linear algebra appeared only
as late as 2005 \cite{Develin05}. At present, this field is primarily
being developed in France (Akian, Gaubert, Grigoriev and others),
sometimes in collaboration with researchers from other countries
(Izhakian, Guterman).

Since there are no known efficient algorithms for the main problems of
tropical linear algebra, it is currently little used in practice.
Nevertheless, there are practical problems that would benefit from
developments in this field. For instance, Noel, Grigoriev, Vakulenko,
and Radulescu have recently proposed a way to use algorithms for
solving tropical linear systems to study stable states of reaction
networks in biology \cite{NGVR12a, NGVR12b}. Thus, problems of
tropical linear algebra are important from both theoretical and
practical points of view.

It is worth noting that many definitions and theorems of tropical
geometry (including, in particular, tropical linear algebra) appeared
much earlier than tropical geometry itself took shape as a field of
study. In these cases, tropical geometry serves as a language in which
it is convenient to state theorems that have already been proven from
a different viewpoint. Although at first glance this serves little
useful purpose, sometimes such a translation may lead to new
results. For instance, a translation of the Viro's patchworking method
to the language of tropical geometry has led Mikhalkin to an algorithm
for computing Gromov--Witten invariants \cite{Mikhalkin03}.

\subsection{Tropical linear systems}

After researchers had agreed on the definitions of a tropical
polynomial and tropical root, one of the first problems that they
tried to solve was constructing an algorithm for testing the
feasibility of a linear tropical system. However, unlike the classical
case the tropical problem turned out to be much harder, and despite
the fact that it was put forward five years ago, no efficient solution
is known to date.

Similar to the classical case, a tropical linear system can be
conveniently defined with a matrix. One way to test the feasibility of
a classical linear system is to test whether its determinant is
zero. Therefore, due to the idempotent correspondence principle, one
could expect something similar in the tropical case as well. A formula
for the tropical determinant was first proposed by Izhakian in 2008
\cite{Izhakian08}. The determinant was defined completely similar to
the classical determinant with the sole difference that one uses
$\oplus$ instead of addition, $\otimes$ instead of multiplication, and
there is no $(-1)^n$ factor (i.e., this construction corresponds to
the determinant and the permanent at the same time). Izhakian also
showed that a tropical system defined by a square matrix has a
solution if and only if its coefficients are not roots of the tropical
determinant (i.e., the determinant of this matrix is minimized only on
a single monomial). Such square matrices were called \emph{tropical
  singular}.  Interestingly, 14 years before Izhakian, Butcovič
already defined tropical singularity under the name of strong
regularity \cite{Butcovic94}.  However, it was done before tropical
geometry appeared in earnest, and Butcovič did not establish any
connection between strong regularity and feasibility of tropical
systems.

It is important to note that Butcovič proposed an efficient algorithm
for testing a tropical matrix for singularity \cite{Butcovic94}.  One
can also note that the singularity condition is equivalent to the
existence of a unique minimal weight matching and can therefore be
efficiently tested with, e.g., the Hungarian method
\cite{Kuhn55}. Thus, the feasibility problem for tropical linear
systems defined by square matrices was efficiently solved by Izhakian
in 2008. However, for other matrices even pseudopolynomial algorithms
(that would work in time polynomial of the numerical value of the
input rather than its size) were not known.

In 2009, Izhakian generalized his definition to rectangular matrices
\cite{Izhakian09} and showed that a system defined by a rectangular
matrix is infeasible if and only if it contains a singular submatrix
of maximal width (i.e., width equal to the width of the matrix).  Note
that Izhakian's results imply that similar to the classical case, a
tropical system with fewer equations than variables is always
feasible. However, since a rectangular matrix may contain an
exponential number of square submatrices of maximal width (in case
when the height is much larger than the width), this result did not
imply an efficient algorithm for solving tropical linear systems
defined by rectangular matrices. We should also note that in the case
of finite coefficients Izhakian's results are in fact a simple
corollary of the theorem that establishes that Kapranov rank and
tropical rank of a matrix are maximized simultaneously, a theorem
proven by Develin, Santos, and Sturmfels in 2005 \cite{Develin05}.

In case when the number of equations and variables coincide, one can
not only efficiently test the system for feasibility but also solve
it. To do so, Grigoriev proposed\cite{Grigoriev-personal} to drop the
equation that intersects with minimal matchings in two cells and apply
the tropical Cramer's rule to the rest of the matrix
\cite{Gebert03}. Thus, the problem of solving a tropical
system with a square matrix can be solved in polynomial time.

The first pseudopolynomial algorithm for solving rectangular matrices
was presented by Akian, Gaubert, and Guterman in 2010 \cite{Akian10}.
They showed that the feasibility problem for a tropical linear system
has a polynomial reduction to the problem of finding the winner in
mean payoff games. Mean payoff games were proposed by Ehrenfeucht and
Mycielski in 1979 \cite{Erenfeucht79} and have already been quite
comprehensively studied by 2010. For instance, in 1993 Karzanov and
Lebedev, using the results of Karp \cite{Karp78}, showed that the
problem of finding the winner in a mean payoff game lies in the
intersection of complexity classes $NP$ and $coNP$ \cite{Karzanov93};
in 1995, Zwick and Paterson proposed a pseudopolynomial algorithm for
solving this problem \cite{Zwick95}.  Several times, there appeared
algorithms that claimed to find the winner of a mean payoff game in
polynomial time \cite{Karzanov88, Lozovanu91, Lozovanu93}, but so far
all of them turned out to contain mistakes \cite{Zwick95}.

Thus, it was shown in 2010 that the feasibility problem for a tropical
system lies in the intersection of $NP$ and $coNP$; this is an
interesting complexity class: there are few problems known to be in
$NP \cap coNP$ but not known to be in $P$, and most of them are
polynomially equivalent to each other. The feasibility problem also
got a pseudopolynomial algorithm. However, it is easy to construct an
example of a matrix that results, by the algorithm of Akian, Gaubert,
and Guterman, in a mean payoff game with no known efficient
algorithm. Therefore, the problem of finding an efficient algorithm
for testing feasibility of a tropical linear system remained open.

In 2011, Grigoriev proposed a different pseudopolynomial algorithm
\cite{Grigoriev13} similar to the Gram--Schmidt process that is
designed to immediately solve the feasibility problem for a tropical
system. Apart from its relative simplicity, an important feature of
Grigoriev's algorithm was that apart from the pseudopolynomial
estimate he immediately found an exponential complexity bound that
does not depend on the numbers that occur in the matrix. Later Davydow
showed \cite{Davydow13eng} that if one fixes any one of the three
parameters (width, height, and maximal coefficient in the matrix), the
algorithm will work in time polynomial with respect to the other two
parameters. These were optimistic results: they meant that if there
were examples of inputs on which Grigoriev's algorithm is not
polynomial, they would have to be very complex since all three
parameters would have to change simultaneously. Unfortunately, in 2012
Davydow found such a series \cite{Davydow13eng}, and by the time
Grigoriev's work was published it was already known that this
algorithm is also not efficient in the general case.

As we have already mentioned, in 2010 Akian et al. \cite{Akian10}
showed that the feasibility problem for a tropical linear system can
be polynomially reduced to the problem of finding the winner in mean
payoff games. In 2012, Grigoriev and Podolskii \cite{Grigoriev12}
found the inverse reduction, showing that the feasibility problem for
a tropical linear system and the winner problem in a mean payoff game
are equivalent. They constructed a reduction to the problem of
$\left(\min, +\right)$ systems (systems of equations of the form
$Ax=Bx$, where matrices are multiplied by vectors in the tropical
sense, and the equivalence of the problem of $\left(\min, +\right)$
systems to the problem of mean payoff games was proven by Bezem,
Nieuwenhuis, and Rodrígez-Carbonell already in 2010 \cite{Bezem10}.
At the same time, a similar result was directly obtained by Akian,
Gaubert, and Guterman \cite{Akian12}.

On one hand, after Grigoriev and Podolskii showed that testing
feasibility for a tropical system is as hard as finding the winner in
a mean payoff game, it became clear that finding an efficient
algorithm is rather unlikely. On the other hand, they showed that this
problem is interesting not only as an independent problem but also as
a completely new approach to mean payoff games, and that this problem
deserves an even more detailed scrutiny.

Thus, at present there exist efficient feasibility testing algorithms
for systems where the number of equations exceeds the number of
variables by a predefined constant; we will call such systems
\emph{weakly overdetermined}. In this case, it suffices to enumerate
all square submatrices of maximal width for the system's matrix and
test each of them for singularity with the Hungarian method. Then,
Izhakian's results imply that a system if feasible if and only if all
resulting matrices are nonsingular. For systems with an unbounded
number of equations, only pseudopolynomial algorithms are known
(algorithm of Akian--Gaubert--Guterman and Grigoriev's algorithm).
Note that in case of systems with bounded number of variables
Grigoriev's algorithm works in polynomial time.

\subsection{Our contributions}

In this work, we present two main results related to new algorithms
for solving tropical linear systems. First, we present a new
description for Grigoriev's algorithm for solving tropical linear
systems that lets us generalize Grigoriev's algorithm and another well
known algorithm of Akian, Gaubert, and Guterman and consider a whole
family of algorithms that differ in the lifting operation. Even a
simple straightforward combination of these two algorithms can already
solve hard counterexamples for both Grigoriev's and
Akian--Gaubert--Guterman algorithms, and we leave devising hard
counterexamples for these new algorithms as an interesting open
problem.

Second, we present a new algorithm for overdetermined tropical linear
systems, i.e., systems that have more equations than variables. We
reduce solving an overdetermined tropical system to solving several of
its subsystems. This leads to a general algorithm that works on all
tropical systems and runs in time $\Theta\left({m \choose n}
\mathrm{poly}\left(m, n, \log M\right)\right)$.  Moreover, we show
that weakly overdetermined tropical systems (where equations outnumber
variables only by a predefined constant) admit a polynomial time
solution.
 
The paper is organized as follows. In Section~\ref{sec:statement}, we
give a formal definition of the feasibility problem for a tropical
system and introduce the notation used throughout the
paper. Section~\ref{sec:grigoriev} is devoted to optimizing
Grigoriev's algorithm for solving linear tropical systems; our
optimization leads to a unified approach for Grigoriev's algorithm
together with the algorithm of Akian, Gaubert, and Guterman, which in
turn lets us combine the two algorithms, getting an algorithm with no
known counterexamples where it would have to work for superpolynomial
time. Finally, as we have already mentioned, it is possible to test
feasibility of weakly overdetermined tropical linear systems in
polynomial time. Section~\ref{sec:overdet} presents a novel algorithm
that actually solves such systems in polynomial
time. Section~\ref{sec:conclusion} concludes the paper.

\section{Problem setting}\label{sec:statement}

In this section, we give basic definitions regarding tropical linear
systems.

\begin{definition}
A \emph{tropical linear system} is a rectangular matrix of size $m
\times n$. A \emph{solution} of a tropical linear system is a row of
$n$ elements such that after adding it to each row of the matrix each
sum does not contain a strict minimum, i.e., the minimal element
occurs at least twice in every row. A tropical linear system is called
\emph{feasible} if there exists a row that is a solution of this
system \cite{Grigoriev13}.
\end{definition}

\begin{example}
For the matrix
  \[ \left( \begin{array}{ccc}
  1 & 2 & 3 \\ 3 & 2 & 1
  \end{array} \right), \]
the row
  \[ \left( \begin{array}{ccc} 
    1 & 0 & 1
  \end{array} \right) \]
represents a solution: after adding it to the first row we get the
minimal value $2$ in the first and second columns; for the second row,
we get the minimal value $2$ in the second and third columns.
\end{example}

\begin{example}
The matrix
  \[ \left( \begin{array}{cc}
  1 & 2 \\ 3 & 2
  \end{array} \right) \]
is obviously infeasible.
\end{example}

In this work we will consider the problem of establishing feasibility
for \emph{integer-valued} tropical linear systems.  First, we note
that we can apply some simple transformations to a system's matrix
without changing its feasibility status.  The following proposition is
obvious.

\begin{proposition}\label{transformation}
The class of feasible tropical linear systems is invariant with
respect to adding an arbitrary constant to all numbers in one row or
in one column. Moreover, given a solution of the system after such a
transformation, one can find a solution of the original system by
adding to the solution the difference between first rows of the matrix
before and after the transformation.
\end{proposition}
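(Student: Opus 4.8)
The plan is to treat the two kinds of transformation---adding a constant $c$ to a single row, and adding a constant $c$ to a single column---separately, after isolating the one fact that makes everything work: for a fixed row $i$ of an $m\times n$ matrix with entries $a_{ij}$, whether the numbers $a_{i1}+x_1,\dots,a_{in}+x_n$ attain their minimum at least twice depends only on the \emph{relative} values inside that row. In particular, replacing every value $a_{ij}+x_j$ in a single row by $a_{ij}+x_j+d$ (a uniform shift of the whole row by the same constant $d$) changes neither the position of the minimum nor its multiplicity. This single observation is the engine of the entire argument.

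First I would handle the row transformation. If $A'$ is obtained from $A$ by adding $c$ to row $k$, then for any candidate solution $x$ the values of $A'$ coincide with those of $A$ in every row $i\neq k$, while in row $k$ every value is shifted by the same constant $c$. By the observation above, row $k$ attains its minimum at least twice for $(A',x)$ if and only if it does for $(A,x)$. Hence $x$ is a solution of $A$ if and only if it is a solution of $A'$: the \emph{same} vector works, and feasibility is preserved in both directions.

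Next comes the column transformation, which is where the recovery formula originates. If $A'$ is obtained from $A$ by adding $c$ to column $\ell$, I would set up the explicit correspondence $x\mapsto x'$ given by $x'_\ell=x_\ell-c$ and $x'_j=x_j$ for $j\neq\ell$. A one-line check yields $a'_{ij}+x'_j=a_{ij}+x_j$ for \emph{every} cell $(i,j)$: in column $\ell$ the $+c$ on the matrix entry is cancelled by the $-c$ on the solution entry, and all other cells are untouched. Thus the full matrix of values is literally identical for $(A,x)$ and $(A',x')$, so one is a solution exactly when the other is, and feasibility is again preserved. Inverting this correspondence (set $x_\ell=x'_\ell+c$) is precisely the recovery step, and since $A'$ and $A$ differ in their first rows only by $+c$ in position $\ell$, this recovery is the promised ``add the componentwise difference of the first rows'' (with the sign chosen so that it undoes the transformation).

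Finally I would knit the two cases into the single ``moreover'' statement: given a solution $x'$ of $A'$, the vector $x=x'+(R'_1-R_1)$ is a solution of $A$, where $R_1,R'_1$ are the first rows before and after. In the column case this difference equals exactly $c\,e_\ell$, matching the inverse correspondence above. In the row case with $k\neq1$ the first rows agree, so $x=x'$, consistent with the row analysis; in the row case $k=1$ the difference is the constant vector $c\mathbf{1}$, and here I would invoke the opening observation once more---adding a single global constant to every coordinate of a solution shifts every value in each row uniformly and therefore produces another solution---so $x'+c\mathbf{1}$ is still valid. I do not expect any genuine obstacle: the proposition is elementary, and the only thing demanding care is bookkeeping, namely getting the sign and direction of the first-row difference right and noting that in the row case the recovered vector may differ from the original solution by a harmless global constant.
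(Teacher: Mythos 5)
Your proof is correct, but note that the paper offers no argument at all here: it introduces the statement with ``The following proposition is obvious'' and moves on. Your write-up supplies exactly the intended elementary argument --- a row shift changes every value in that row by the same constant and so preserves the position and multiplicity of its minimum, while a column shift by $c$ is exactly cancelled by shifting the corresponding solution coordinate by $-c$, making the value matrices of $(A,x)$ and $(A',x')$ literally identical --- and your verification of all three subcases ($k=1$ row, $k\neq 1$ row, column) of the ``moreover'' clause is complete, using correctly that a global constant shift $x \mapsto x + c\mathbf{1}$ preserves solutions. One genuinely useful byproduct of your bookkeeping: the paper's phrase ``the difference between first rows of the matrix before and after the transformation,'' read literally as (before)\,$-$\,(after), has the wrong sign. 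As your column computation shows (e.g., for $A=(0\;0)$ with $c=5$ added to the first column, $x'=(0,5)$ recovers $x=(5,5)=x'+(R'_1-R_1)$, whereas $x'+(R_1-R'_1)=(-5,5)$ fails), the correct recovery is $x = x' + \left(R'_1 - R_1\right)$, i.e., (after)\,$-$\,(before); your explicit remark that the sign must be ``chosen so that it undoes the transformation'' resolves an ambiguity the paper leaves silent.
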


Proposition~\ref{transformation} immediately implies the following
remark.

\begin{remark}
Without loss of generality we can assume that all elements of the
system's matrix are nonnegative.
\end{remark}

In what follows we introduce the following notation for a matrix $A$:
\begin{itemize}
\item $m(A)$, the number of rows in the matrix $A$,
\item $n(A)$, the number of columns in the matrix $A$,
\item $k(A) = n(A) - m(A)$,
\item $M(A)$, the maximal number in the matrix $A$,
\item $R(A)$, the set of rows of the matrix $A$.
\item $a_i(A)$, the $i^{\text{th}}$ row of the matrix $A$.
\end{itemize}
We will omit the argument in this notation if it is clear from context
what matrix we are talking about.

\section{Grigoriev's algorithm and its modifications}\label{sec:grigoriev}

\subsection{The original algorithm}
One recently proposed algorithm for solving tropical linear systems is
Grigoriev's algorithm. As we have already noted, a key feature of this
algorithm is that the work \cite{Grigoriev13} that proposes this
algorithm immediately shows both an upper bound on the algorithm's
complexity that polynomially depends on the matrix size (but it is
polynomial in $M$, the largest element of the matrix, rather than
$\log{M}$) and an upper bound that polynomially depends on $\log{M}$
(but it is not polynomial in matrix size).

We begin with a description of Grigoriev's algorithm. We begin by
noting that due to Proposition~\ref{transformation} we can find not a
solution of the matrix but rather a series of transformations that
consists of adding a constant to all elements in a row or in a column
that would reduce the original matrix to a matrix that has a zero row
for a solution. In what follows, we call such a matrix the
\emph{solution matrix}; finding it is equivalent to finding a
solution.

To solve a system of size $m~\times~n$, we proceed by induction and
assume that we have solved the system of size $(m - 1)~\times~n$
obtained from the initial system by removing its first row. From this
moment on we will assume that all rows of the matrix, except possibly
the first row, do not contain strict minima. Next we define the
\emph{lifting} operation with Algorithm~\ref{Lift1}.

\begin{algorithm}
  \SetAlgoLined 
  \DontPrintSemicolon
  \caption{Matrix lifting in Grigoriev's algorithm\label{Lift1}} 
  \KwData {a matrix $A$ and the
    index $i$ of the column where the first row's minimum is located.}
  \KwResult {if the lifting is possible then $A$ is the lifted matrix}
  $J \leftarrow \{i\}$\; 
  \While{there exists a row in which exactly
    one minimum is achieved in column $j$ such that $j \notin J$} { 
    $J \leftarrow J \cup \{j\}$\; 
  }
  \uIf{$\left|J\right| = n(A)$}{ 
    output that lifting is impossible\; 
  } 
  \Else { 
    $a \leftarrow \infty$\;
    \For{$i = 0$ \KwTo $m$} { 
      $a_i \leftarrow $ maximal number one can
      add to columns with indices from the set $J$ in such a way that
      minimal elements remain minimal in the row with index $i$\; $a
      \leftarrow \min(a, a_i)$\; 
    } 
    \ForEach{$i \in J$} { 
      add $a$ to the elements of column $i$\; 
    } 
  }
\end{algorithm}

Note that although the lifting algorithm does contain some
indeterminacy: it is not specified in what order we add columns to the
set $J$, when the first loop ends the set $J$ is uniquely defined
since the maximal by inclusion such set is unique.

\begin{algorithm}
  \SetAlgoLined \DontPrintSemicolon
  \caption{Grigoriev's algorithm \label{Grigoriev}} \KwData {$A$, a
    matrix of the tropical system} \KwResult {if the tropical system
    defined by $A$ was feasible then its solution matrix, else
    ``infeasible''} Run this algorithm for the matrix $A'$ resulting
  from $A$ by deleting the first row.\; Reduce the matrix $A$ to such
  a form that no row except possibly the first contains a strict
  minimum.\; \While{ the first row contains a strict minimum } { \uIf
    {lifting of the matrix $A$ is impossible} { \Return
      ``infeasible''; } \Else { lift matrix $A$.  } } \Return $A$.
\end{algorithm}

Then we transform the matrix according to Algorithm~\ref{Grigoriev}.
The time complexity of this algorithm was originally shown to be
$O(m^2 n^2M\log{M})$ \cite{Grigoriev13}. Later, a different estimate
of $O(\log{M} \cdot m \cdot n^2 \cdot {{m + n} \choose n} )$ was shown
by Davydow \cite{Davydow13eng}.  Besides, there is a known
counterexample for Grigoriev's algorithm: there exists a sequence of
matrices (with unbounded growth in the number of rows and columns)
such that Grigoriev's algorithm takes $\Omega(n^{\frac{m}{6}}\log{M})$
time to process them, where $M = \mathrm{poly}(n^{\frac{m}{6}})$
\cite{Davydow13eng}.

\subsection{Properties of the solutions found by Grigoriev's algorithm}

We introduce a partial ordering on the solutions of a tropical linear
system: we say that one solution is less than the other if it is
smaller componentwise. Then the following theorem holds.

\begin{theorem}\label{thm:smallest}
For a matrix with one strict minimum, Grigoriev's algorithm finds the
smallest nonnegative solution. In terms of solution matrices,
Grigoriev's algorithm finds the smallest solution matrix which is
greater than the original matrix.
\end{theorem}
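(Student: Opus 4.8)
The plan is to work throughout with the solution-matrix formulation. By Proposition~\ref{transformation} I may assume the unique strict minimum sits in the first row and that every other row is already \emph{reduced} (its minimal value is attained at least twice); a nonnegative solution $x$ then corresponds bijectively to the matrix obtained by adding $x_j$ to column $j$, this matrix is $\ge A$ entrywise exactly when $x\ge 0$, and the componentwise order on solutions matches the entrywise order on the corresponding matrices. Under this dictionary the two assertions of the theorem coincide, so it suffices to prove that the vector $x^{*}$ of column increments produced by Grigoriev's algorithm is a nonnegative solution satisfying $x^{*}\le x$ for every nonnegative solution $x$.

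First I would record a structural fact: the set of nonnegative solutions is closed under componentwise minimum. If $x,y$ are solutions and $z=\min(x,y)$, then in each row a column $j_0$ attaining the minimum of $A+z$ also attains the minimum of whichever of $A+x,A+y$ agrees with $z$ at $j_0$; the second minimizer guaranteed there (by reducedness) has an even smaller value under $z$, hence also minimizes $A+z$. This shows a componentwise smallest solution is unique if it exists, and the algorithm will exhibit it.

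The core is an invariant maintained along the lifting loop. Write $c$ for the increments accumulated so far, so the current matrix is $A+c$ with rows $2,\dots,m$ reduced and the first row strict; I claim $c\le x$ for every nonnegative solution $x$. This holds at the start ($c=0$), and each lifting step of Algorithm~\ref{Lift1} adds $a$ to the columns of the closure set $J$. So the inductive step reduces to a \emph{no-overshoot lemma}: if $d\ge 0$ and $A+c+d$ is reduced, then $d_j\ge a$ for all $j\in J$. Granting it, for a solution $x$ we set $d=x-c\ge 0$ and get $x_j\ge c_j+a$ on $J$ and $x_j\ge c_j$ off $J$, so the updated increments still lie below $x$; at termination $x^{*}=c$ is itself a solution, and the invariant makes it the smallest.

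Proving the no-overshoot lemma is where the real work lies. For a row $r$ let $P_r$ be its set of minimizers in $A+c$, let $\mu_r$ be its minimal value, and recall $a_r$ is the bound computed for row $r$ and $a=\min_r a_r$. Set $\delta=\min_{j\in J}d_j$ and $J_0=\{j\in J:d_j=\delta\}$, and suppose for contradiction $\delta<a$. For any row $r$ with $P_r\subseteq J$, every column outside $J$ has value at least $\mu_r+a_r\ge\mu_r+a>\mu_r+\delta$ in $A+c+d$, while every column of $J\setminus P_r$ exceeds $\mu_r+\delta$ as well; hence the minimum of such a row equals $\mu_r+\delta$ and is attained exactly on $P_r\cap J_0$. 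Applying this to the first row, whose only minimizer is the strict-minimum column $i$, forces $i\notin J_0$. Now take the column $j^{*}\in J_0$ that entered $J$ earliest in the while loop; it was forced by some reduced row $r^{*}$ for which $j^{*}$ was the last minimizer to join $J$, so all other columns of $P_{r^{*}}$ entered before $j^{*}$. Since $j^{*}\in P_{r^{*}}\cap J_0$, reducedness of $A+c+d$ forces a second column $j'\in P_{r^{*}}\cap J_0$; but $j'\in P_{r^{*}}$ entered $J$ before $j^{*}$ and lies in $J_0$, contradicting the minimality of $j^{*}$. The hard part is exactly this extremal argument: one must combine the closure property of $J$ (no reduced row has a single minimizer outside $J$) with the definition of $a$ to pin the minimum of a binding row onto the least-raised columns, and then exploit the order in which $J$ was built to close the loop.
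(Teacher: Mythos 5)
Your proposal is correct and follows essentially the same route as the paper's proof: an induction over lifting steps with the invariant that the accumulated column increments never exceed any nonnegative solution, where your ``no-overshoot lemma'' is exactly the paper's one-sentence induction step (``if we add less than the algorithm to some column of $J$, the column with the smallest added number acquires a strict minimum''). The only difference is one of detail: you make explicit what the paper leaves implicit, namely the min-closure of the solution set (the paper's appeal to linearity) and the extremal argument via the earliest-entered column of $J_0$ that actually justifies the induction step.
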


\begin{proof}
First note that the existence of such a solution follows from the
well-known fact that the set of solutions is linear.

We prove this theorem by induction on the number of liftings. Namely,
we show that on each step of Grigoriev's algorithm the matrix does not
become less than the maximal solution matrix among those that are
smaller than the original matrix. For the induction base, note that
the original matrix obviously satisfies this condition.

For the induction step, note that if, during a lifting, we add to at
least one column a number smaller than the one added in the algorithm,
then the column to which we added the smallest number will have a
strict minimum.  This means precisely that in the smallest solution
matrix among those that larger than the original matrix we have to add
at least as much as Grigoriev's algorithm adds.
\end{proof}

\subsection{Optimizing Grigoriev's algorithm}

We begin with a simple corollary of Theorem~\ref{thm:smallest}.

\begin{corollary}\label{cor:interrupt}
If Grigoriev's algorithm has changed every column at least once, it
will output ``infeasible''.
\end{corollary}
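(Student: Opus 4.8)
The plan is to read the corollary off Theorem~\ref{thm:smallest} by a uniform shift argument. First I would record the one structural fact I need about the algorithm: each lifting only ever adds a strictly positive amount to the columns in the set $J$ and never decreases any entry, so across the whole run column $j$ is increased by some total $s_j \ge 0$, and these totals are nondecreasing in time. Saying that the algorithm has changed column $j$ means precisely that $s_j > 0$. Thus, under the hypothesis that every column has been changed, the matrix the algorithm would return is a matrix $B$ with $B_{ij} = A_{ij} + s_j$ and $s_j > 0$ for every $j$, where $A$ is the original matrix (which, having one strict minimum, is itself not a solution matrix).

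Next I would argue by contradiction: suppose the algorithm does not output ``infeasible'', so it terminates with such a $B$ as a solution matrix. I would set $c = \min_j s_j > 0$ and form $B'$ by subtracting $c$ from every column of $B$, i.e. $B'_{ij} = A_{ij} + (s_j - c)$. Subtracting the same constant from all columns is $n$ applications of the column operation in Proposition~\ref{transformation}; it leaves the position of the minimum within each row unchanged, so the zero row is still a solution and $B'$ is again a solution matrix. Since $s_j - c \ge 0$ for every $j$, we have $B' \ge A$ entrywise, while $c > 0$ gives $B' < B$ strictly in every entry.

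This contradicts Theorem~\ref{thm:smallest}, which asserts that $B$ is the \emph{smallest} solution matrix greater than $A$: I will have produced a solution matrix $B'$ with $A \le B' < B$. Hence the assumption was false, and the algorithm must output ``infeasible''. (Equivalently, in the language of solutions: the smallest nonnegative solution must have a zero coordinate, since otherwise one could subtract its minimum coordinate and shrink it; changing every column forces all coordinates positive, so no smallest nonnegative solution exists and the system is infeasible.)

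I expect no substantive obstacle here, since the real content is carried by Theorem~\ref{thm:smallest}; the only points needing care are bookkeeping. One is monotonicity of the column totals, so that a column changed at some intermediate step has $s_j > 0$ at termination as well. The other is the verification that a uniform downward shift of all columns simultaneously preserves the solution-matrix property while keeping the matrix above $A$; this is immediate from the lifting mechanics and from Proposition~\ref{transformation}, so the argument reduces to the minimality contradiction above.
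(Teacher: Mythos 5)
Your proposal is correct and takes essentially the same approach as the paper: both deduce the corollary from the minimality guarantee of Theorem~\ref{thm:smallest} via the observation that if every column shift were strictly positive, one could subtract $c = \min_j s_j > 0$ uniformly and obtain a smaller solution (matrix) still above the original, a contradiction. The paper's one-line proof states exactly this in the language of solution vectors --- which is precisely your closing parenthetical --- while your solution-matrix phrasing and the monotonicity bookkeeping simply spell out the same argument in more detail.
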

\begin{proof}
In the smallest solution, at least one of the elements must be zero;
otherwise, one could subtract it from every element and get a smaller
solution.
\end{proof}

Corollary~\ref{cor:interrupt} implies our first optimization of
Grigoriev's algorithm: we can interrupt it and output ``infeasible''
not when lifting is impossible, but rather when each column has been
changed at least once, which can happen much earlier.

For a second optimization, we can also do without the recursion on the
matrix height: we can simply add all columns with a strict minimum to
the set $J$ from the very beginning. Note that while the first
optimization obviously cannot hurt Grigoriev's algorithm, we do not
know this for the second idea, although we have failed to find an
example where the original version of the algorithm would work faster
than this modification.

For this version of Grigoriev's algorithm, the same upper and lower
bounds can be proven in the same way as for the original version.
Apart from some simplification, this version of Grigoriev's algorithm
has the advantage that it is now very similar to the
Akian--Gaubert--Guterman algorithm; the only difference remains in the
lifting operation: the Akian--Gaubert--Guterman algorithm lifts only
columns with strict minima, and only for the value needed in order for
the minima to cease being strict (see Algorithm~\ref{alg:agglifting}).

\begin{algorithm}
  \label{alg:general}
  \SetAlgoLined 
  \DontPrintSemicolon
  \caption{General scheme} 
  \KwData {$A$, a matrix of the tropical system} 
  \KwResult {if the tropical system defined by $A$ was
    feasible then its solution matrix, else ``infeasible''}
  \While { there exist rows with strict minima, and there exists a
    column that has not been lifted} { 
    $A \leftarrow \mathrm{Lifting}(A)$ 
  } 
  \If {there exist rows with strict minima} {
    \Return ``infeasible''\;
  } \Else { 
    \Return $A$\;
  }
\end{algorithm}

This leads us to considering an entire scheme of algorithms that
differ only in the lifting operation; their general scheme is shown in
Algorithm~\ref{alg:general}. All we need from this operation is that
after the lifting the matrix does not exceed the minimal solution that
is larger than the matrix before the lifting.  For instance, on the
lifting step we can add to each column the maximum of the numbers that
Akian--Gaubert--Guterman algorithm and Grigoriev's algorithm propose
to add to this column. It is easy to construct even better lifting
methods, but at present, we do not know a superpolynomial
counterexample even to this simple combination of the
Akian--Gaubert--Guterman algorithm and Grigoriev's algorithm, shown in
Algorithm~\ref{alg:combination}.  Finding such a counterexample
remains an interesting open problem that could shed light on important
properties of this class of algorithms.

\begin{algorithm}
  \label{alg:agglifting}
  \SetAlgoLined 
  \DontPrintSemicolon
  \caption{Lifting in the Akian--Gaubert--Guterman algorithm} 
  \KwData {$A$, a matrix of the tropical system} 
  \KwResult {lifted matrix} 
  \ForEach {strict minimum} { 
    find the number to add to the corresponding column such that the
    minimum ceases to be strict 
  } 
  Add to each column of $A$ the maximal of all numbers found in the
  loop.\; 
  \Return $A$\;
\end{algorithm}

\begin{algorithm}
  \SetAlgoLined 
  \DontPrintSemicolon
  \caption{Lifting in the optimized Grigoriev's
    algorithm\label{Lift2}}
  \KwData {$A$, a matrix of the tropical system}
  \KwResult {lifted matrix} 
  $J \leftarrow \emptyset$\;
  \While {there exist a row such that its minimum is achieved in a
    single column $j$ such that $j \notin J$} { 
    $J \leftarrow J \cup {j}$\; 
  } 
  $a \leftarrow \infty$\; 
  \For{$i = 0$ \KwTo $m$} { 
    $a_i \leftarrow $ maximal number that can be added to columns with
    indices from $J$ so that in the row with index $i$, minimal
    elements remain minimal\; 
    $a \leftarrow \min(a, a_i)$\; 
  }
  \ForEach {$i \in J$} { 
    add $a$ to column $i$\; 
  } 
  \Return $A$\;
\end{algorithm}

\begin{algorithm}
  \label{alg:combination}
  \SetAlgoLined 
  \DontPrintSemicolon
  \caption{Lifting for combination of Akian--Gaubert--Guterman and
    Grigoriev's algorithms}
  \KwData {$A$, a matrix of the tropical system}
  \KwResult {lifted matrix} 
  $B \leftarrow A$\;
  $C \leftarrow A$\;
  Lift $B$ with Akian--Gaubert--Guterman lifting algorithm\;
  Lift $C$ with optimized Grigoriev's lifting algorithm\;
  $A \leftarrow B \oplus C$\;
\end{algorithm}

\section{Weakly overdetermined systems}\label{sec:overdet}

In this section, we proceed to the second main result of this work,
namely an algorithm for solving overdetermined tropical systems. In
the following theorem, we show how to reduce solving an overdetermined
tropical system of width $n$ to solving $n+1$ systems corresponding to
its submatrices.  Throughout the section, we will assume that
arithmetic operations with numbers in the matrix take $O(1)$ time.

\begin{theorem}\label{thm:overdet}
Consider a matrix $A$ and $n(A) + 1$ subsets of its set of rows such
that every row of $A$ is covered at least $n$ times by these
subsets. Then, if each subset defines a feasible tropical system, $A$
is also feasible. Moreover, if solutions of each of these systems are
known, the solution of matrix $A$ can be found in polynomial time.
\end{theorem}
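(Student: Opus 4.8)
The plan is to treat the two assertions separately: first the qualitative claim that $A$ is feasible, then the algorithmic claim that a solution can be assembled from the known solutions of the subsystems. Write $n = n(A)$, let $S_1,\dots,S_{n+1}$ be the given subsets of rows, and let $x^{(1)},\dots,x^{(n+1)}$ be the corresponding known solutions. The starting observation is that, since every row is covered, the solution set of $A$ is exactly the intersection $\bigcap_{j}\mathrm{Sol}(S_j)$ of the solution sets of the subsystems; and that by Izhakian's criterion for rectangular matrices (recalled in the introduction) an overdetermined system is feasible precisely when none of its width-$n$ square submatrices is tropically singular.

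For feasibility I would argue by a counting argument tailored exactly to the hypothesis. Take any $n$ rows of $A$. Each such row fails to belong to at most one of the $n+1$ subsets, so the $n$ chosen rows together can be missing from at most $n$ subsets; since there are $n+1$ subsets, at least one subset $S_j$ contains all $n$ of them. Because $S_j$ defines a feasible system, its rows contain no singular width-$n$ submatrix, so in particular these $n$ rows do not form one. As the $n$ rows were arbitrary, $A$ has no singular width-$n$ submatrix and is therefore feasible. This is precisely where the numerology ``$n+1$ subsets, every row covered $n$ times'' is used: it is the smallest redundancy that forces every $n$-subset of rows into a common feasible subsystem.

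For the constructive part I would try to exhibit a solution as a tropical combination $x = \bigoplus_j (\lambda_j \otimes x^{(j)})$ of the known solutions, using that the solution set of each row is closed under componentwise minimum and under adding a constant (the linearity recalled in the proof of Theorem~\ref{thm:smallest}). The useful reduction is that, for a fixed row $i$, one has $\min_k (a_{ik} + x_k) = \min_j (\lambda_j + m_{ij})$ with $m_{ij} = \min_k (a_{ik} + x^{(j)}_k)$, and the inner minimum over $k$ is attained at least twice as soon as the outer minimum over $j$ is attained by some index $j$ whose subsystem $S_j$ contains row $i$: for such a $j$ the solution $x^{(j)}$ already ties row $i$ in at least two columns, and those ties survive the combination. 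Thus it suffices to choose the scalars $\lambda_j$ so that, simultaneously for every row $i$, the attaining index is not forced to come only from the unique subsystem that omits $i$.

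The main obstacle is exactly this simultaneous choice of the $\lambda_j$: a single row $i$ is endangered only by its one missing subsystem, but different rows are endangered by different subsystems, and a careless choice can leave the forbidden index as the sole minimizer. I expect the resolution to require more than arbitrary known solutions --- most naturally, replacing each $x^{(j)}$ by the canonical smallest solution supplied by Theorem~\ref{thm:smallest} and then reading the shifts $\lambda_j$ off the compatibility conditions above, whose solvability should follow from the feasibility of $A$ already established together with the canonical choice. Since there are only $n+1$ scalars and each of the $m$ row-conditions is checkable in $\mathrm{poly}(m, n, \log M)$ time, once the scalars are fixed the whole assembly runs in polynomial time; verifying that a valid choice always exists, and exhibiting it explicitly, is the technical heart of the argument.
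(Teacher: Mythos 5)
Your feasibility argument (any $n$ rows are collectively missing from at most $n$ of the $n+1$ subsets, hence all lie in some common feasible subsystem, hence form no singular maximal-width submatrix by Izhakian's criterion) is correct, and it is genuinely different from the paper, which never invokes Izhakian. Your setup for the constructive half is also exactly on track: the paper likewise takes $x = \bigoplus_j \left(\lambda_j \otimes x^{(j)}\right)$, and your reduction --- row $i$ is safe provided the outer minimum over $j$ is attained by some index whose subsystem covers $i$ --- is sound. But at the step you yourself call ``the technical heart'' the proposal stops, and your guessed resolution is off-target: no canonical smallest solutions from Theorem~\ref{thm:smallest} are needed, and no row-by-row compatibility conditions have to be solved. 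The missing idea is that the shifts are themselves the solution of a second, \emph{underdetermined} tropical system: form the $(n+1) \times n$ matrix $S$ whose $j$th row is $x^{(j)}$, and let $\alpha = (\lambda_1, \dots, \lambda_{n+1})$ be any solution of the transposed system $S^\top$. Since $S^\top$ has $n$ rows and $n+1$ columns, such an $\alpha$ always exists and is computable in polynomial time (e.g.\ by tropical Cramer's rule, as in Algorithm~\ref{alg:over}), which settles existence and the algorithmic claim in one stroke.

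Why this choice works: $\alpha$ being a solution of $S^\top$ says precisely that in every coordinate $k$ the minimum of $\lambda_j + x^{(j)}_k$ over $j$ is attained at least twice, so deleting any single term from the tropical sum leaves $x$ unchanged coordinatewise. For a given row $i$ of $A$, delete the (at most one) index $j_0$ whose subset omits row $i$; then $x$ is exhibited as a tropical combination of solutions of systems that all contain row $i$, hence satisfies row $i$ by the linearity of solution sets that you already invoke. This per-coordinate ``every minimum attained twice'' property is strictly stronger than your sufficient condition and defuses exactly the danger you flagged: with this $\alpha$, no index can be a sole minimizer in any coordinate, for any row simultaneously, and arbitrary known solutions $x^{(j)}$ suffice. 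Note also that once this construction goes through, feasibility of $A$ follows for free, so your Izhakian-based counting argument, while correct, is redundant in the paper's economy --- though it does serve as an independent, non-constructive certificate of where the numerology ``$n+1$ subsets, each row covered $n$ times'' is used.
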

  \begin{proof}
We begin by constructing a matrix of solutions $S$ with rows $s_i$,
where $s_i$ is the solution for the system defined by the
$i^{\text{th}}$ subset. Then we find a solution for the tropical
system $S^\top$ ($S$ transposed), denoting it by $\alpha$; this
solution exists and can be found in polynomial time since $S^\top$ is
underdetermined: the number of equations is less than the number of
variables. Note that $s_i + \alpha_i$ is still a solution for the
system defined by the $i^{\text{th}}$ subset since multiplication by a
tropical constant preserves a solution. Consider $x =
\bigoplus_i\left(s_i+\alpha_i\right)$.  Then $x$ is the solution for
the original matrix $A$.

Indeed, since $\alpha$ is a solution, we can remove any row from this
tropical sum, and $x$ will remain unchanged (because every minimum is
achieved twice). Since each of the rows $a_i$ is covered at least $n$
times, by dropping the row corresponding to a set that does not cover
$a_i$ we get a solution for $a_i$ since the set of solutions is
linear.
  \end{proof}

\begin{example}
Let us consider how this algorithm works on the following example with an
overdetermined matrix
  \[ \left( \begin{array}{ccc}
  1 & 2 & 3 \\ 1 & 2 & 1 \\ 1 & 2 & 5 \\ 2 & 3 & 1
  \end{array} \right). \]
  We begin by choosing $4$ subsets of its rows:
  \[ \left( \begin{array}{ccc}
  1 & 2 & 3 \\ 1 & 2 & 1 \\ 1 & 2 & 5
  \end{array} \right),
  \left( \begin{array}{ccc} 1 & 2 & 3 \\ 1 & 2 & 1 \\ 2 & 3 & 1
  \end{array} \right),
  \left( \begin{array}{ccc} 1 & 2 & 3 \\ 1 & 2 & 5 \\ 2 & 3 & 1
  \end{array} \right),
  \left( \begin{array}{ccc} 1 & 2 & 1 \\ 1 & 2 & 5 \\ 2 & 3 & 1
  \end{array} \right). \]

  First we construct the matrix of solutions $S$ and its transpose:
  \[ S = \left( \begin{array}{ccc}
    1 & 0 & 0 \\ 2 & 1 & 2 \\ 3 & 2 & 3 \\ 2 & 1 & 2
  \end{array} \right), \qquad
  S^\top = \left( \begin{array}{cccc} 1 & 2 & 3 & 2 \\ 0 & 1 & 2 & 1
    \\ 0 & 2 & 3 & 2
  \end{array} \right). \]
  
  Next we solve $S^\top$, getting the solution matrix
  \[ \left( \begin{array}{cccc}
    3 & 2 & 3 & 2 \\ 2 & 1 & 2 & 1 \\ 2 & 2 & 3 & 2
  \end{array} \right) \text{ and its transpose }
  \left( \begin{array}{ccc} 3 & 2 & 2 \\ 2 & 1 & 2 \\ 3 & 2 & 3 \\ 2 &
    1 & 2
  \end{array} \right), \]
  and find the tropical sum of its rows:
  \[ x = \left( \begin{array}{ccc}
    2 & 1 & 2
  \end{array} \right). \]
  The resulting $x$ is a solution for the original matrix.
\end{example}

\begin{remark}
As it often happens in tropical mathematics, this theorem holds in the
classical case as well. Indeed, the $n + 1$ vectors that represent
solutions for subsets of equations are necessarily linearly
dependent. This means that there exists a vector $x$ that can be
expressed as a linear combination of the other vectors; by the
reasoning similar to the proof of Theorem~\ref{thm:overdet}, this
vector will be a solution for the original problem.
\end{remark}

The most straightforward way to turn Theorem~\ref{thm:overdet} into an
algorithm is to choose on each step $n+1$ subsets with $n + k - 1$
rows each, making sure that for each of the sets the absent rows are
different. Thus, the feasibility problem for the original matrix can
be reduced to $n + 1$ problems of smaller size. To estimate the
complexity of the resulting algorithm in terms of $n$ and $k$, we
denote this complexity by $T(n,k)$. Recall that $T(n, 0) =
poly(n)$. We get the following recurrent relation for $T(n, k)$: $T(n,
k) = (n + 1)T(n, k - 1) + poly(n)$. This means that $T(n, k) = (n +
1)^k poly(n)$, which is a polynomial for $k$ bounded by a constant, so
we have arrived at the following theorem.

\begin{theorem}
The problem of solving weakly overdetermined tropical linear systems
(systems for which $k$ is bounded by a constant) can be solved in
polynomial time.
\end{theorem}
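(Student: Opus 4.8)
The plan is to argue by induction on $k$, the amount by which the system is overdetermined, using Theorem~\ref{thm:overdet} as the recursive step. The base case is $k=0$, i.e.\ square systems: these can be solved in polynomial time by the square-case procedure recalled in the introduction (drop the equation meeting the minimal matchings twice and apply the tropical Cramer rule), so $T(n,0)=\mathrm{poly}(n)$. For the inductive step I would take a system with $m>n$ rows and reduce its solution to that of $n+1$ strictly smaller subsystems, each overdetermined by one less; applying the induction hypothesis and then recombining their solutions through Theorem~\ref{thm:overdet} yields a solution of the original system.

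The key step is the construction of the $n+1$ subsystems. First I would form $n+1$ subsets of the rows, each obtained by deleting a single, distinct row (this is possible precisely because $m\ge n+1$ when $k\ge 1$). Every row then lies in all but the one subset that deletes it, so it is covered at least $n$ times, and the hypothesis of Theorem~\ref{thm:overdet} is met; moreover each subsystem has $m-1$ rows and $n$ columns, hence overdetermination $k-1$, so the induction applies. It remains to see that this reduction is faithful in both feasibility directions. One direction is supplied by Theorem~\ref{thm:overdet}: if all $n+1$ subsystems are feasible, so is $A$, and a solution of $A$ is assembled in polynomial time from the subsystem solutions. The converse --- needed so that the algorithm correctly reports ``infeasible'' --- is immediate and is the one fact not already contained in Theorem~\ref{thm:overdet}: any solution of $A$ satisfies every subset of its rows, so if $A$ is feasible then each subsystem is feasible; contrapositively, infeasibility of a single subsystem forces infeasibility of $A$.

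With the reduction in place, the complexity obeys $T(n,k)=(n+1)\,T(n,k-1)+\mathrm{poly}(n)$, the additive term covering the (polynomial) cost of choosing the subsets and of the recombination in Theorem~\ref{thm:overdet}. Unrolling this recurrence against the base case $T(n,0)=\mathrm{poly}(n)$ gives $T(n,k)=(n+1)^{k}\,\mathrm{poly}(n)$, which is polynomial whenever $k$ is bounded by a constant; this is exactly the claim. I do not expect a serious obstacle here: the genuine content all sits in Theorem~\ref{thm:overdet}, and the only points requiring care are verifying the covering hypothesis for the chosen subsets and recording the easy converse direction so that infeasibility propagates up the recursion. The recurrence unrolling and the base-case citation are routine.
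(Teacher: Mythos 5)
Your proposal matches the paper's own argument essentially step for step: the same choice of $n+1$ subsets each omitting one distinct row (the paper's ``subsets with $n+k-1$ rows each'' with pairwise different absent rows), the same recurrence $T(n,k)=(n+1)\,T(n,k-1)+\mathrm{poly}(n)$ with base case $T(n,0)=\mathrm{poly}(n)$ from the square case, and the same unrolling to $(n+1)^k\,\mathrm{poly}(n)$. Your explicit remark that infeasibility of a subsystem forces infeasibility of $A$ (so the algorithm correctly reports ``infeasible'') is a point the paper leaves implicit, but it does not change the route --- the proof is correct and the same as the paper's.
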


One can consider other ways of choosing the subsets. One of the most
efficient methods is the following: consider a matrix $A$ for which we
need to find a solution. First, we introduce an ordering on the rows
of a matrix corresponding to the order of rows in the matrix $A$; we
will further assume that rows in all subsets are ordered in this
way. On each step, we choose $n+1$ subsets as follows: the first $n$
rows, all rows except the first, all rows except the second, and so
on, ending with all rows except the $n^{\text{th}}$.  To further
improve the algorithm's running time, we use dynamical programming,
storing the matrices that appear over the course of the algorithm's
operation and their corresponding solutions in order to reuse them if
the same matrix appears for a second time.

After this optimization, all we need to estimate the running time is
to estimate the number of submatrices appearing in the algorithm.
Note that all submatrices look as follows: the first $n$ rows are an
arbitrary ordered subset of rows, and the rest are always several
consecutive last rows. Thus, the number of matrices can be bounded
from above by $m{m \choose n}$ (there are ${m \choose n}$ ways to
choose the first $n$ rows, and the other rows are uniquely defined by
their number). As a result, we get that the algorithm for solving
weakly overdetermined systems has complexity ${m \choose n} poly(m,
n)$, which up to a polynomial coincides with the upper bound for the
best known algorithm for feasibility testing in tropical linear
systems. This bound is better than the upper bound on Grigoriev's
algorithm but worse than the best known lower bound. With this
algorithm, outlined in Algorithm~\ref{alg:over}, we have finally
proven our main result in this section.

\begin{theorem}
Any tropical linear system can be solved in time $$\Theta\left({m
  \choose n} \mathrm{poly}\left(m, n, \log M\right)\right).$$
\end{theorem}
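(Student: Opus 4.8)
The plan is to turn Theorem~\ref{thm:overdet} into a recursive divide-and-conquer procedure built on the economical subset selection described above, to memoize it with dynamic programming, and then to bound the number of distinct subproblems that can ever arise. Given an ordered submatrix $B$ with more than $n$ rows, I would form the $n+1$ subsets $S_0,S_1,\dots,S_n$, where $S_0$ consists of the first $n$ rows of $B$ and $S_i$ (for $1\le i\le n$) is $B$ with its $i$-th row deleted. A direct count shows that every row of $B$ lies in exactly $n$ of these subsets, so the covering hypothesis of Theorem~\ref{thm:overdet} is met and the correctness of the combination step follows immediately from that theorem. The subset $S_0$ is square and is solved directly in polynomial time (using the square-matrix solver based on tropical Cramer's rule discussed in the introduction, which also detects infeasibility via singularity), and the recursion is applied to $S_1,\dots,S_n$, each of which has one fewer row.

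The heart of the argument is a structural invariant on the submatrices that appear. I would prove by induction on the recursion depth that every submatrix $B$ decomposes as a ``head'' consisting of at most $n$ rows together with a ``tail'' that is a full consecutive suffix $a_t,a_{t+1},\dots,a_m$ of the original matrix. The base case $B=A$ is clear. For the inductive step, deleting one of the first $n$ rows of $B$ removes a row from the head and promotes the first tail row $a_t$ into the head, so the tail only ever shrinks from its front and therefore remains a suffix. Since rows never change their relative order, a head is nothing more than an unordered subset of rows, and the tail is determined by its starting index alone. Consequently the number of distinct submatrices is at most $m\binom{m}{n}$: there are at most $\binom{m}{n}$ choices for the head and at most $m$ choices for where the tail begins.

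With this bound in hand, I would finish by combining the count with the per-node cost. Memoization guarantees that each of the $O\!\left(m\binom{m}{n}\right)$ distinct submatrices is processed only once; at each node the work is the combination step of Theorem~\ref{thm:overdet} (one underdetermined solve for $S^\top$ plus a tropical sum) together with one square solve, which is $\mathrm{poly}(m,n)$ arithmetic operations. Restoring the realistic cost model (the section assumed $O(1)$ arithmetic), I would verify that all numbers produced---solution entries and the scalars $\alpha_i$---are bounded by sums of polynomially many original entries, so their bit-length is $\mathrm{poly}(m,n,\log M)$ and each arithmetic operation costs $\mathrm{poly}(m,n,\log M)$. Multiplying the number of nodes by the per-node cost yields the upper bound $\binom{m}{n}\,\mathrm{poly}(m,n,\log M)$.

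To obtain the matching $\Theta$ I would exhibit an input that forces the full expansion---one on which every $n$-subset actually appears as a head---so that the $\binom{m}{n}$ factor cannot be improved for this algorithm. The main obstacle is precisely the invariant of the second paragraph: it is what collapses an a priori $2^m$-sized recursion tree down to $m\binom{m}{n}$ memoized nodes, and getting the head/tail bookkeeping exactly right---in particular ensuring the tail never develops a gap---is the delicate part. The arithmetic-size bookkeeping and the lower-bound construction are comparatively routine.
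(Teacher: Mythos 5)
Your proposal is correct and follows essentially the same route as the paper: the same $n+1$ subsets (the first $n$ rows, plus the matrix with its $i$-th row deleted for $i=1,\dots,n$), the same memoization, and the same structural observation that every submatrix is an arbitrary $n$-row head followed by a consecutive suffix of the original matrix, yielding the $m\binom{m}{n}$ bound on distinct subproblems. If anything, you are more careful than the paper, which neither verifies the bit-length of intermediate numbers nor proves the $\Omega$ direction of the stated $\Theta$ bound (it only establishes the upper bound and notes it matches the best known feasibility-testing algorithm).
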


\begin{algorithm}\label{alg:over}
  \SetAlgoLined 
  \DontPrintSemicolon
  \caption{Algorithm for solving weakly overdetermined systems}
  \KwData {$A$, a matrix of the tropical system} 
  \KwResult {if the tropical system defined by $A$ was feasible then
    its solution matrix, else ``infeasible''} 
  \tcp{solutions for submatrices are obtained with this algorithm}

  \If {$n(A) = m(A)$} { 
    \Return{a solution for the matrix $A$ obtained by Grigoriev's
      algorithm for solving square tropical matrices}\; 
  } 
  \If {$n(A) > m(A)$} { 
    \Return{a solution for the matrix $A$ obtained with tropical
      Cramer's rule}\; 
  } 
  $S \leftarrow \emptyset$\; 
  \uIf { $\{a_1, a_2, ..., a_n\}$ is feasible} { 
    $S \leftarrow S \cup $ \{solution of $\{a_1, a_2, ..., a_n\}$\}\;
  } \Else { 
    \Return ``infeasible''\; 
  } \For {$i = 0$ \KwTo $n$} { 
    \uIf { $\{a_1, a_2, ..., a_{i - 1}, a_{i + 1}, ..., a_n\}$ is
      feasible} { 
      $S \leftarrow S \cup $ \{solution $\{a_1, a_2, ..., a_{i - 1},
      a_{i + 1}, ..., a_n\}$\}\; 
    }
    \Else { 
      \Return ``infeasible''\; 
    } 
  } 
  $B \leftarrow$ matrix of the rows contained in the set $S$\; 
  Transpose $B$\; 
  $C \leftarrow$ solution matrix of system $B$ obtained with tropical
  Cramer's rule\; 
  Transpose $C$\; 
  \Return{tropical sum of the rows of $C$}\;
\end{algorithm}

\section{Conclusion}\label{sec:conclusion}

In this work, we have presented new algorithms for solving tropical
linear systems: a modification of Grigoriev's algorithm that leads to
a new family of algorithms with different lifting operations and a
novel algorithm for solving overdetermined tropical systems that has
the same time complexity as the previously known feasibility testing
algorithm.

Tropical linear systems turn out to have very interesting
computational properties: the problem lies in $NP\cap coNP$ but no
polynomial algorithm is known, and the best algorithm known so
far was polynomial in any two of its characteristics out of three
(width, height, and maximal element in the system's matrix). Further
work in this direction may include further improvements of the
algorithms proposed in this paper and finding counterexamples for the
new algorithms proposed here: while we do claim that we have made
Grigoriev's algorithm significantly faster in practice, we doubt that
a simple combination of Grigoriev's algorithm and the
Akian--Gaubert--Guterman is indeed polynomial, so we expect the search
for hard counterexamples to succeed.

Another interesting direction for further study comes from the notion
that for every linear set of points there is a minimal tropical
prevariety (with respect to inclusion) that contains this set. In low
dimensions such a prevariety can be constructed as a closure of the
original set under several simple operations, and one can test
infeasibility by testing if a prevariety built on a set of vectors
contains the entire space.  We believe that this approach may lead to
new algorithms for solving tropical linear systems.

\bibliography{tropical}
\end{document}